\newif\ifFull
\DeclareMathOperator*{\argmax}{argmax}
\DeclareMathOperator*{\argmin}{argmin}
\newcommand{\Is}       {:=}
\newcommand{\set}[1]{\left\{ #1\right\}}
\newcommand{\gilt}{:}
\newcommand{\sodass}{\,:\,}
\newcommand{\setGilt}[2]{\left\{ #1\sodass #2\right\}}
\def\MdR{\ensuremath{\mathbb{R}}}
\newcommand{\NEW}[1]{{\color{black} #1}}
\title{FREIGHT: Fast Streaming Hypergraph Partitioning} %
\author{Kamal {Eyubov}}{Heidelberg University, Germany }{kamal.eyubov@stud.uni-heidelberg.de}{kamal.eyubov@stud.uni-heidelberg.de}{}%
\author{Marcelo {Fonseca Faraj}}{Heidelberg University, Germany}{marcelofaraj@informatik.uni-heidelberg.de}{https://orcid.org/0000-0001-7100-236X}{}
\author{Christian {Schulz}}{Heidelberg University, Germany}{christian.schulz@informatik.uni-heidelberg.de}{https://orcid.org/0000-0002-2823-3506}{}
\authorrunning{Eyubov et al.} %
\keywords{hypergraph partitioning, graph partitioning, edge partitioning, streaming} %
\begin{document}

\maketitle

\begin{abstract}
Partitioning the vertices of a (hyper)graph into $k$ roughly balanced blocks such that few (hyper)edges run between blocks is a key problem for large-scale distributed processing. 
A current trend for partitioning huge (hyper)graphs using low computational resources are streaming algorithms. 
In this work, we propose FREIGHT: a Fast stREamInG Hypergraph parTitioning algorithm which is an adaptation of the widely-known graph-based algorithm Fennel.
By using an efficient data structure, we make the overall running of FREIGHT linearly dependent on the pin-count of the hypergraph and the memory consumption linearly dependent on the numbers of nets and blocks.
The results of our extensive experimentation showcase the promising performance of FREIGHT as a highly efficient and effective solution for streaming hypergraph partitioning. 
Our algorithm demonstrates competitive running time with the Hashing  algorithm, with a difference of a maximum factor of four observed on three fourths of the instances.
Significantly, our findings highlight the superiority of FREIGHT over all existing (buffered) streaming algorithms and even the in-memory algorithm HYPE, with respect to both cut-net and connectivity measures. 
This indicates that our proposed algorithm is a promising hypergraph partitioning tool to tackle the challenge posed by large-scale and dynamic data processing.
\end{abstract}


\section{Introduction}
\label{sec:introduction}

Graphs are ubiquitous in nature and can be used to represent a wide variety of phenomena such as road networks, dependencies in databases, communications in distributed algorithms, interactions in social networks, and so forth.
Nevertheless, phenomena where interactions between entities are not necessarily pairwise are more adequately modeled by hypergraphs, which can capture higher-order interactions~\cite{lambiotte2019networks}.
With the massive proliferation of data, processing large-scale (hyper)graphs on distributed systems and databases becomes a necessity for a wide range of applications.
When processing a (hyper)graph in parallel, $k$ processors operate on distinct portion of the (hyper)graph while communicating to one another through message-passing.
To make the parallel processing efficient, an important preprocessing step consists of partitioning the vertices of the (hyper)graph into $k$ roughly balanced blocks such that few (hyper)edges run between blocks.
(Hyper)graph partitioning is NP-hard \cite{Garey1974} and there can be no approximation algorithm with a constant ratio for general (hyper)graphs~\cite{BuiJ92}. 
Thus, heuristics are used in practice.
A current trend for partitioning huge (hyper)graphs quickly and using \hbox{low computational resources are streaming algorithms~\cite{tsourakakis2014fennel,awadelkarim2020prioritized,jafari2021fast,HeiStream,StreamMultiSection,mayer2018adwise,hoang2019cusp,alistarh2015streaming,tacsyaran2021streaming}}.

The most popular streaming approach in literature is the one-pass model~\cite{DBLP:journals/pvldb/AbbasKCV18}, where vertices arrive one at a time including their (hyper)edges and then have to be permanently assigned to blocks. 
In the domain of graphs, most algorithms are either very fast but do not care for solution quality at all (such as \texttt{Hashing}~\cite{stanton2012streaming}), or are still fast, but much slower and capable of computing significantly better solutions than just random assignments (such as such  \texttt{Fennel}~\cite{tsourakakis2014fennel}).
Recently, the gap between these groups of algorithms has been closed by a streaming multi-section algorithm~\cite{StreamMultiSection} which is up to two orders of magnitude faster than \texttt{Fennel} while cutting only $5\%$ more edges than in on average.
In the domain of hypergraphs, there is a similar gap that has not yet been closed.
In particular, there is the same trivial \texttt{Hashing} -based algorithm on one side, and more sophisticated and \hbox{expensive algorithms~\cite{alistarh2015streaming,tacsyaran2021streaming} on~the~other~side}.

In this work, we propose \texttt{FREIGHT}: a Fast stREamInG Hypergraph parTitioning algorithm that can optimize for the cut-net as well as the connectivity metric. 
By using an efficient data structure, we make the overall running of \texttt{FREIGHT} linearly dependent on the pin-count of the hypergraph and the memory consumption linearly dependent on the numbers of nets and blocks.
Our proposed algorithm demonstrates remarkable efficiency, with a running time comparable to the \texttt{Hashing}  algorithm and a maximum discrepancy of only four in three quarters of the instances. 
Importantly, our study establishes the superiority of \texttt{FREIGHT} over all current (buffered) streaming algorithms and even the in-memory algorithm \texttt{HYPE}, in both cut-net and connectivity measures. 
This shows the potential of our algorithm as a valuable tool for partitioning hypergraphs in the context of \hbox{large and constantly changing data processing environments}.

\section{Preliminaries}
\label{sec:preliminaries}

\subsection{Basic Concepts}
\label{subsec:basic_concepts}

\subparagraph*{Hypergraphs and Graphs.}
Let $H=(V=\{0,\ldots, n-1\},E)$ be an \emph{undirected hypergraph} with no multiple or self hyperedges, with $n = |V|$ vertices and $m = |E|$ hyperedges (or \emph{nets}).
A net is defined as a subset of $V$. 
The vertices that compose a net are called \emph{pins}.  
A vertex $v\in V$ is \emph{incident} to a net $e\in E $ if $v \in e$.  
Let $c: V \to \MdR_{\geq 0}$ be a vertex-weight function, and let $\omega: E \to \MdR_{>0}$ be a net-weight function.
We generalize $c$ and $\omega$ functions to sets, such that $c(V') = \sum_{v\in V'}c(v)$ and $\omega(E') = \sum_{e\in E'}\omega(e)$.
Let $I(v)$ be the set of incident nets of $v$, let $d(v) \Is |I(v)|$ be the \emph{degree} of $v$, let $d_{w}(v) \Is w(I(v))$ be the \emph{weighted degree} of $v$, and let $\Delta$ be the maximum degree of $H$.
We generalize the notations $d(.)$ and  $d_{w}(.)$ to sets, such that $d(V') = \sum_{v\in V'}d(v)$ and $d_{w}(V') = \sum_{v\in V'}d_{w}(v)$.
Two vertices are \emph{adjacent} if they are incident to a same net.                      
Let the number of pins~$|e|$ in a net~$e$ be the \emph{size} of~$e$, let $\xi = \max_{e \in E}\{|e|\}$ be the maximum size of a net in $H$.

Let $G=(V=\{0,\ldots, n-1\},E)$ be an \emph{undirected graph} with no multiple or self edges, such that $n = |V|$, $m = |E|$.
Let $c: V \to \MdR_{\geq 0}$ be a vertex-weight function, and let $\omega: E \to \MdR_{>0}$ be an edge-weight function.
We generalize $c$ and $\omega$ functions to sets, such that $c(V') = \sum_{v\in V'}c(v)$ and $\omega(E') = \sum_{e\in E'}\omega(e)$.
Let $N(v) = \setGilt{u}{\set{v,u}\in E}$ denote the neighbors of $v$.
A graph $S=(V', E')$ is said to be a \emph{subgraph} of $G=(V, E)$ if $V' \subseteq V$ and $E' \subseteq E \cap (V' \times V')$. 
When $E' = E \cap (V' \times V')$, $S$ is an \emph{induced} subgraph.
Let $d(v)$ be the degree of vertex $v$ and $\Delta$ be the maximum degree of $G$.%

\subparagraph*{Partitioning.}
The \emph{(hyper)graph partitioning} problem consists of assigning each vertex of a (hyper)graph to exactly one of $k$ distinct \emph{blocks} respecting a balancing constraint in order to minimize the weight of the (hyper)edges running between the blocks, i.e., the edge-cut (resp. cut-net).
More precisely, it partitions $V$ into $k$ blocks $V_1$,\ldots,$V_k$ (i.e., $V_1\cup\cdots\cup V_k=V$ and $V_i\cap V_j=\emptyset$ for $i\neq j$), which is called a \emph{\mbox{$k$-partition}} of the (hyper)graph.
The \emph{edge-cut} (resp. \emph{cut-net}) of a $k$-partition consists of the total weight of the \emph{cut edges} (resp. \emph{cut nets}), i.e., edges (resp. nets) crossing blocks.
More formally, let the edge-cut (resp. cut-net) be $\sum_{i<j}\omega(E')$, in which $E' \Is $ $\big\{e\in E, \exists \set{u,v} \subseteq e : u\in V_i,v\in V_j, i \neq j\big\}$ is the~\emph{cut-set} (i.e.,~the set of all cut nets).
The \emph{balancing constraint} demands that the sum of vertex weights in each block does not exceed a threshold associated with some allowed \emph{imbalance}~$\epsilon$.
More specifically, $\forall i~\in~\{1,\ldots,k\} \gilt$ $c(V_i)\leq L_{\max}\Is \big\lceil(1+\epsilon) \frac{c(V)}{k} \big\rceil$.
For each net $e$ of a hypergraph, $\Lambda(e) := \{V_i~|~V_i \cap e \neq \emptyset\}$ denotes the \emph{connectivity set} of $e$.
The \emph{connectivity} $\lambda(e)$ of a net~$e$ is the cardinality of its connectivity set, i.e., $\lambda(e) := |\Lambda(e)|$.
The so-called \emph{connectivity} metric ($\lambda$-1) is computed as $\sum_{e\in E'} (\lambda(e) -1)~\omega(e)$, where $E'$ is the cut-set.

\subparagraph*{Streaming.}
\label{subsec:Computational Model}
Streaming algorithms usually follow an iterative load-compute-store logic.
Our focus and the most used streaming model is the \emph{one-pass} model.
In this model, vertices of a (hyper)graph are loaded one at a time alongside with their adjacency lists, then some logic is applied to permanently assign them to blocks, as illustrated in Figure~\ref{fig:StreamingModel}.
A similar sequence of operations is used to partition a stream of edges of a graph on the fly.
In this case, edges of a graph are loaded one at a time alongside with their end-points, then some logic is applied to permanently assign them to blocks.
This logic can be as simple as a \texttt{Hashing}  function or as complex as scoring all blocks based on some objective and then assigning the vertex to the block with highest score.
There are other, more sophisticated, streaming models such as the sliding window~\cite{patwary2019window} and the buffered streaming~\cite{jafari2021fast,HeiStream}, but are \hbox{beyond the scope of this work}.

\begin{figure}[t]
	\centering
	\includegraphics[width=0.8\linewidth]{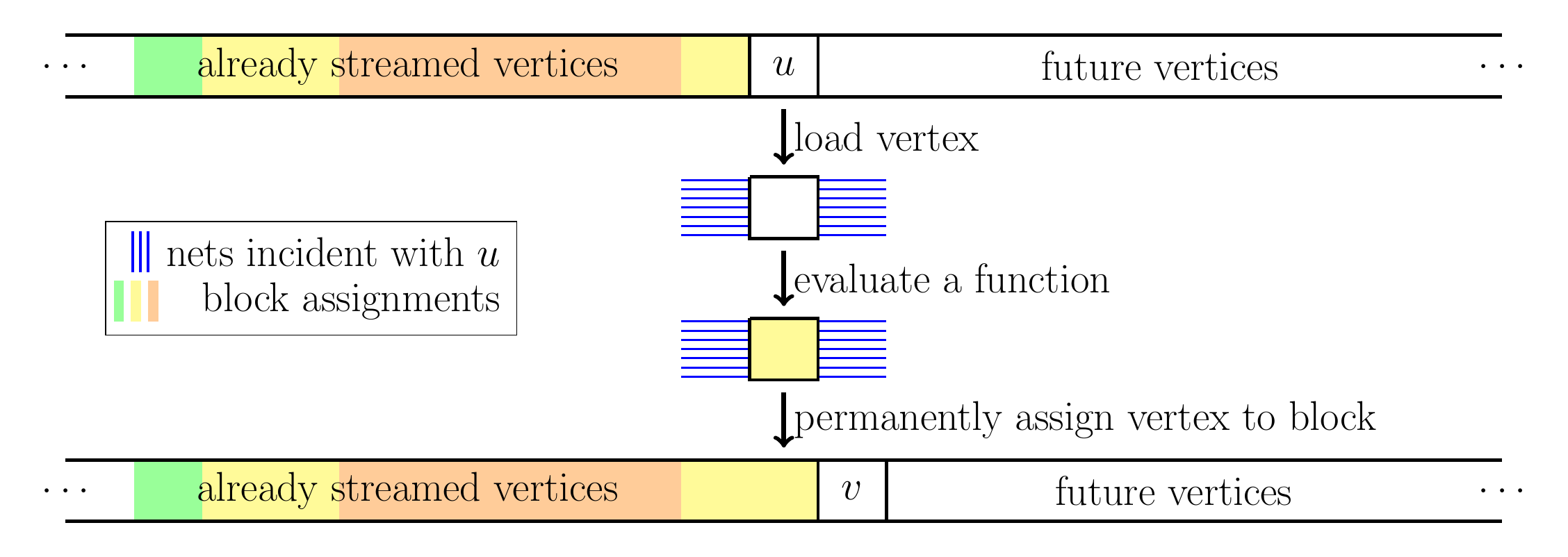}
	\caption{Typical layout of streaming algorithm for hypergraph partitioning.}
	\label{fig:StreamingModel}
\end{figure}

\subsection{Related Work}
\label{subsec:related_work}

There is a huge body of research on (hyper)graph partitioning.
The most prominent tools to partition (hyper)graphs in memory include
\texttt{PaToH}~\cite{ccatalyurek2011patoh},  \texttt{Metis}~\cite{karypis1996parallel},  \texttt{hMetis}~\cite{hMetis},  \texttt{Scotch}~\cite{Pellegrini96experimentalanalysis}, \texttt{HYPE}~\cite{HYPE2018},  \texttt{KaHIP}~\cite{kabapeE},  \texttt{KaMinPar}~\cite{gottesburen2021deep},  \texttt{KaHyPar}~\cite{schlag2016k},  \texttt{Mt-KaHyPar}~\cite{mt-kahypar-d}, and  \texttt{mt-KaHIP}~\cite{DBLP:conf/europar/Akhremtsev0018}.
The readers are referred to~\cite{more_recent_advances_hgp,SPPGPOverviewPaper,DBLP:reference/bdt/0003S19} for extensive material and references.
Here, we focus on the results specifically related to the scope of this paper.
In particular, we provide a detailed review for the following problems based on the one-pass streaming model: hypergraph partitioning and graph vertex partitioning.

\subparagraph*{Streaming Hypergraph Partitioning.}
Alistarh~et~al.~\cite{alistarh2015streaming} propose  \texttt{Min-Max}, a one-pass streaming
algorithm to assign the vertices of a hypergraph to blocks.
For each block, this algorithm keeps track of nets which contain pins in it.
This implies a memory consumption of $O(mk)$.
When a vertex is loaded, \texttt{Min-Max} allocates it to the block containing the largest intersection with its nets while respecting a hard constraint for load balance.
The authors theoretically prove that their algorithm is able to recover a hidden \emph{co-clustering}
with high probability, where a co-clustering is defined as a simultaneous clustering of vertices and hyperedges.
In the experimental evaluation, \texttt{Min-Max} outperforms five intuitive streaming approaches with respect to load imbalance, while producing solutions up to five times more imbalanced than internal-memory algorithms such as  \texttt{hMetis}.

Ta{\c{s}}yaran~et~al.~\cite{tacsyaran2021streaming} propose \NEW{improved versions of} the algorithm  \texttt{Min-Max}~\cite{alistarh2015streaming}.
The authors present \NEW{\texttt{Min-Max-N2P},} a modified version of \texttt{Min-Max} \NEW{that stores blocks containing each net's pins instead of storing nets per block, as done in \texttt{Min-Max}}.
In their experiments, \NEW{\texttt{Min-Max-N2P}} is three orders of magnitude faster than \texttt{Min-Max} while keeping the same cut-net. %
The authors also introduce three \NEW{algorithms with reduced memory usage compared to \texttt{Min-Max}: \texttt{Min-Max-L$\ell$}, a modification of \texttt{Min-Max-N2P} that employs an upper-bound $\ell$ to limit memory consumption per net, \texttt{Min-Max-BF} which utilizes Bloom filters for membership queries, and \texttt{Min-Max-MH} that uses hashing functions to replace the connectivity information between blocks and nets}.
In their experiments, their three algorithms reduce the running time in comparison to  \texttt{Min-Max}, especially \NEW{\texttt{Min-Max-L$\ell$}} and \NEW{\texttt{Min-Max-MH}}, which are up to four orders of magnitude faster.
On the other hand, the three algorithms generate solutions with worse cut-net than  \texttt{Min-Max}, especially \NEW{\texttt{Min-Max-MH}}, which increases the cut-net by up to an order of magnitude.
Moreover, the authors propose a \NEW{technique to improve the partitioning decision in the streaming setting by including a buffer to store some vertices and their net sets.} 
\NEW{This approach operates similarly to \texttt{Min-Max-N2P}, but with the added ability to revisit buffered vertices and adjust their partition assignment based on the connectivity metric.}
\NEW{The authors propose three algorithms using this buffered approach: \texttt{REF} that buffers every incoming vertex but only reassigns those that may improve connectivity, \texttt{REF\_RLX} that buffers all vertices and reassigns all vertices in the buffer, and \texttt{REF\_RLX\_SV} that only buffers vertices with small net sets and reassigns all vertices in the buffer.}
Their experimental results show that the \NEW{use of buffered approaches leads to a $5$-$20\%$ improvement in partitioning quality compared to non-buffered approaches, but with a trade-off of increased runtime}.

\subparagraph*{Streaming Graph Vertex Partitioning.}
Stanton and Kliot~\cite{stanton2012streaming} introduced graph partitioning in the streaming model and proposed some heuristics to solve it.
Their most prominent heuristic include the one-pass methods \texttt{Hashing} and \emph{linear deterministic greedy}~(\texttt{LDG}).
In their experiments, \texttt{LDG} had the best overall edge-cut.
In this algorithm, vertex assignments prioritize blocks containing more neighbors and use a penalty multiplier to control imbalance. 
Particularly, a vertex $v$ is assigned to the block $V_i$ that maximizes $|V_i \cap N(v)|*\lambda(i)$ with $\lambda(i)$ being a multiplicative penalty defined as $(1-\frac{|V_i|}{L_\text{max}})$. 
The intuition is that the penalty avoids to overload blocks that are already very heavy.
In case of ties on the objective function, \texttt{LDG} moves the vertex to the block with fewer vertices.
Overall, \texttt{LDG} partitions a graph in $O(m+nk)$ time.
On the other hand, \texttt{Hashing}  has running time $O(n)$ \hbox{but produces a poor edge-cut}.

Tsourakakis~et~al.~\cite{tsourakakis2014fennel} proposed \texttt{Fennel},  a one-pass partitioning heuristic based on the widely-known clustering objective \emph{modularity}~\cite{brandes2007modularity}.
\texttt{Fennel} assigns a vertex $v$ to a block $V_i$, respecting a balancing threshold, in order to maximize an expression of type $|V_i\cap N(v)|-f(|V_i|)$, i.e., with an additive penalty.
This expression is an interpolation of two properties: attraction to blocks with many neighbors and repulsion from blocks with many non-neighbors.
When $f(|V_i|)$ is a constant, the expression coincides with the first property.
If $f(|V_i|) = |V_i|$, the expression coincides with the second property.
In particular, the authors defined the  \texttt{Fennel} objective with $f(|V_i|) = \alpha * \gamma * |V_i|^{\gamma-1}$, in which~$\gamma$ is a free parameter and $\alpha = m \frac{k^{\gamma-1}}{n^{\gamma}}$.
After a parameter tuning made by the authors,  \texttt{Fennel} uses $\gamma=\frac{3}{2}$, which provides $\alpha=\sqrt{k}\frac{m}{n^{3/2}}$.
As \texttt{LDG},  \texttt{Fennel} \hbox{partitions a graph in $O(m+nk)$ time}.

Faraj~and~Schulz~\cite{StreamMultiSection} propose a shared-memory streaming algorithm for vertex partitioning which performs recursive multi-sections on the fly. 
As a preliminary phase, their algorithm decomposes a $k$-way partitioning problem into a hierarchy containing $\lceil\log_b k\rceil$ layers of $b$-way partitioning subproblems.
This hierarchy can either reflect the topology of a high performance system to solve a process mapping~\cite{HighQualityHierarchicalPM20,PredariProcessMapping21} or be computed for an arbitrary $k$ to solve a regular vertex partitioning.
Then, an adapted version of \texttt{Fennel} is used to solve each of the subproblems in such a way that the whole $k$-partition is computed on the fly during a single pass over the graph.
While producing an edge-cut around $5\%$ lower than \texttt{Fennel}, their algorithm has theoretical complexity $O((m+nb)\log_b k)$ and experimentally ran up to two \hbox{orders of magnitude faster than \texttt{Fennel}}.

Besides the one-pass model, other streaming models have also been used to solve vertex partitioning.
Restreaming graph partitioning has been introduced by Nishimura~and~Ugander~\cite{nishimura2013restreaming}.
In this model, multiple passes through the entire input are allowed, which enables iterative improvements.
The authors proposed easily implementable restreaming versions of \texttt{LDG} and \texttt{Fennel}: \texttt{ReLDG} and \texttt{ReFennel}, respectively.
Awadelkarim and Ugander~\cite{awadelkarim2020prioritized} studied the effect of vertex ordering for streaming graph partitioning.
The authors introduced the notion of \emph{prioritized streaming}, in which (re)streamed vertices are statically or dynamically reordered based on some priority. 
The authors proposed a prioritized version of \texttt{ReLDG}
Patwary et al.~\cite{patwary2019window} proposed \texttt{WStream}, a greedy stream algorithm that keeps a sliding stream window.
Jafari et al.~\cite{jafari2021fast} proposed a shared-memory multilevel algorithm based on a buffered streaming model.
Their algorithm uses the one-pass algorithm \texttt{LDG} as the coarsening, initial partitioning, and the local search steps of their multilevel scheme.
Faraj~and~Schulz~\cite{HeiStream} proposed \texttt{HeiStream}, a multilevel algorithm also based on a buffered streaming model.
Their algorithm loads a chunk of vertices, builds a model, and then partitions this model with a traditional multilevel algorithm coupled with an \hbox{extended version of the \texttt{Fennel} objective}.

\section{FREIGHT: Fast Streaming Hypergraph Partitioning}
\label{sec:FREIGHT: Fast Streaming Hypergraph Partitioning}

In this section, we provide a detailed explanation of our algorithmic contribution.
First, we define our algorithm named \texttt{FREIGHT}.
Next, we present the advantages and disadvantages of using two different formats for streaming hypergraphs and partitioning them using \texttt{FREIGHT}.
Additionally, we explain how we have removed the dependency on $k$ from the complexity of \texttt{FREIGHT} by implementing an \hbox{efficient data structure for block sorting}.

\subsection{Mathematical Definition}
\label{subsec:Mathematical Definition}

In this section, we provide a mathematical definition for \texttt{FREIGHT} by expanding the idea of \texttt{Fennel} to the domain of hypergraphs.
Recall that, assuming the vertices of a graph being streamed one-by-one, the \texttt{Fennel} algorithm assigns an incoming vertex~$v$ to a block $V_d$ \hbox{where $d$ is computed as follows}:
\begin{equation}
d = \argmax\limits_{i,~|V_i| < L_{\max}}\big\{|V_i\cap N(v)|-\alpha * \gamma * |V_i|^{\gamma-1}\big\}
\label{eq:Fennel}
\end{equation}

The term $-\alpha * \gamma * |V_i|^{\gamma-1}$, which penalizes block imbalance in \texttt{Fennel}, is directly used in \texttt{FREIGHT} without modification and with the same meaning.
The term $|V_i\cap N(v)|$, which minimizes edge-cut in \texttt{Fennel}, needs to be adapted in~\texttt{FREIGHT} to minimize the intended metric, i.e., either cut-net or connectivity.
Before explaining how this is adapted, recall that, in contrast to graph partitioning, in hypergraph partitioning the incident nets~$I(v)$ of an incoming vertex~$v$ might contain nets that are already cut, i.e., with pins assigned to multiple blocks.
The version of \texttt{FREIGHT} designed to optimize for \emph{connectivity} accounts for already cut nets by keeping track of the block~$d_e$ to which the most recently streamed pin of each net~$e$ has been assigned.
More formally, the connectivity version of \texttt{FREIGHT} assigns an incoming vertex~$v$ of a hypergraph to a block $V_d$ with $d$ given by Equation~(\ref{eq:FREIGHT}), where \hbox{$I^i_{obj}(v) = I^i_{con}(v) =$} \hbox{$\{ e \in I(v): d_e = i\}$}.
On the other hand, the version of \texttt{FREIGHT} designed to optimize for \emph{cut-net} ignores already cut nets, since their contribution to the overall cut-net of the hypergraph $k$-partition is fixed and cannot be changed anymore.
More formally, the cut-net version of \texttt{FREIGHT} assigns an incoming vertex~$v$ of a hypergraph to a block $V_d$ with $d$ given by Equation~(\ref{eq:FREIGHT}), where \hbox{$I^i_{obj}(v) = I^i_{cut}(v) =$} \hbox{$I^i_{con}(v) \setminus E'$}~and~$E'$ is the \hbox{set of already cut nets}.
\begin{equation}
d = \argmax\limits_{i,~|V_i| < L_{\max}}\big\{|I^i_{obj}(v)|-\alpha * \gamma * |V_i|^{\gamma-1}\big\}
\label{eq:FREIGHT}
\end{equation}

Both configurations of \texttt{FREIGHT} interpolate two objectives: favoring blocks with many incident (uncut) nets and penalizing blocks with large cardinality.
We briefly highlight that \texttt{FREIGHT} can be adapted for weighted hypergraphs.
In particular, when dealing with weighted nets, the \hbox{term~$|I^i_{obj}(v)|$} is substituted \hbox{by~$\omega(I^i_{obj}(v))$}.
Likewise when dealing with weighted vertices, the term $-\alpha * \gamma * |V_i|^{\gamma-1}$ is substituted by $-c(v) * \alpha * \gamma * c(V_i)^{\gamma-1}$, where the weight~$c(v)$~of~$v$ is \hbox{used as a multiplicative factor in the penalty term}.

\subsection{Streaming Hypergraphs}
\label{subsec:Models for Streaming Hypergraphs}

In this section, we present and discuss the streaming model used by \texttt{FREIGHT}.
Recall in the streaming model for graphs vertices  are loaded one at a time alongside with their adjacency lists.
Thus, just streaming the graph (without doing additional compuations, implies a time cost $O(m+n)$.
In our model, the vertices of a hypergraph are loaded one at a time alongside with their incident nets, as illustrated in Figure~\ref{fig:StreamingModel}.
Our streaming model implies a cost $O(\sum_{e \in E}{|e|} + n)$ to stream the hypergraph, where $O(\sum_{e \in E}{|e|})$ is the cost to stream each net $e$ exactly $|e|$ times.
\texttt{FREIGHT} uses $O(m+k)$ memory, with $O(m)$ being used to keep track, for each net~$e$, of its cut/uncut status as well as the block~$d_e$ to which its most recently streamed pin was assigned.
This net-tracking information, which substitutes the need to keep track of vertex assignments, is necessary for executing \texttt{FREIGHT}. 
Although \texttt{FREIGHT} consumes more memory than required by graph-based streaming algorithms which often use $O(m+k)$ memory, it is still far better than the $O(mk)$ worst-case memory required by the state-of-the-art algorithms for streaming hypergraph partitioning~\cite{alistarh2015streaming,tacsyaran2021streaming}, all of which are also based on a computational model that implies a cost \hbox{$O(\sum_{e \in E}{|e|} + n)$} \hbox{just to stream the hypergraph}.

\subsection{Efficient Implementation}
\label{subsec:Efficient Implementation}

In this section, we describe an efficient implementation for \texttt{FREIGHT}.
Recall that, for every vertex $v$ that is loaded, \texttt{FREIGHT} uses Equation~(\ref{eq:FREIGHT}) to find the block with the highest score among \NEW{up to}~$k$~options.
A simple method to accomplish this task consists of explicitly evaluating the score for each block and identifying the one with the highest score.
This results in a total of $O(nk)$ evaluations, leading to an overall complexity of \hbox{$O(\sum_{e \in E}{|e|}+nk)$}.
We propose an implementation that is significantly more efficient \hbox{than this approach}.

For each loaded vertex~$v$, our implementation separates the blocks \NEW{$V_i$ for which \hbox{$|V_i|<L_{\max}$}} into two disjoint sets,~$S_1$~and~$S_2$.
In particular, the set~$S_1$ comprises blocks $V_i$ where \hbox{$|I^i_{obj}(v)|>0$}, while the set~$S_2$ comprises the remaining blocks, i.e., blocks $V_i$ for which \hbox{$|I^i_{obj}(v)|=0$}.
Using the sets provided, we break down Equation~(\ref{eq:FREIGHT}) into Equation~(\ref{eq:FREIGHT_E_1}) and Equation~(\ref{eq:FREIGHT_E_2}), which are solved separately. 
The resulting solutions are compared based on their \texttt{FREIGHT} scores to ultimately find the solution for Equation~(\ref{eq:FREIGHT}).
The overall process is \hbox{illustrated in~Figure~\ref{fig:DecomposeEquation}}.
\begin{equation}
d = \argmax\limits_{i \in S_1}\big\{|I^i_{obj}(v)|-\alpha * \gamma * |V_i|^{\gamma-1}\big\}
\label{eq:FREIGHT_E_1}
\end{equation}
\begin{equation}
d = \argmax\limits_{i \in S_2}\big\{|I^i_{obj}(v)|-\alpha * \gamma * |V_i|^{\gamma-1}\big\} = \argmin\limits_{i \in S_2}|V_i|
\label{eq:FREIGHT_E_2}
\end{equation}

\begin{figure}[t]
	\centering
	\includegraphics[width=0.9\linewidth]{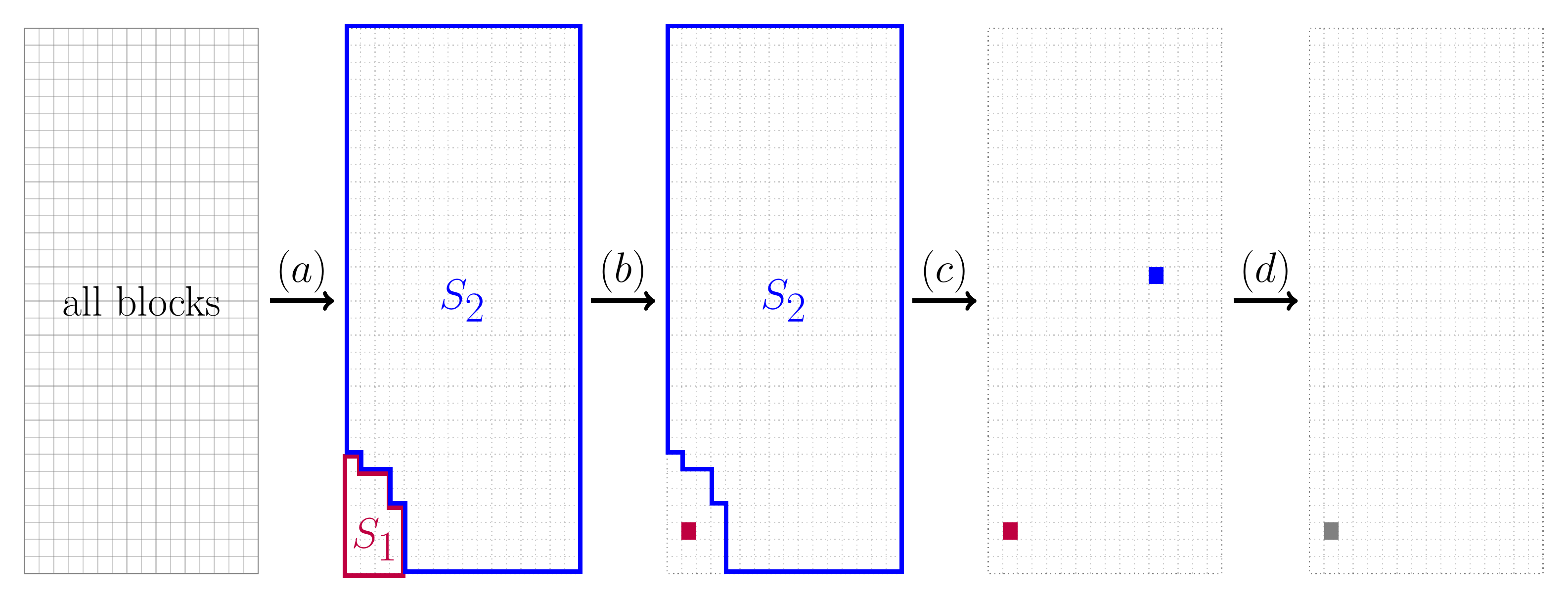}
	\caption{Illustration of the process to solve Equation~(\ref{eq:FREIGHT}) for an incoming vertex~$u$ with \hbox{$k=512$}~blocks. 
		(a)~The~$k$~blocks are decomposed into~$S_1$~and~$S_2$, with~\hbox{$|S_1| = O(|I(u)|)$}. 
		(b)~Equation~(\ref{eq:FREIGHT_E_1}) is explicitly solved at cost~\hbox{$O(|I(u)|)$}.  
		(c)~Equation~(\ref{eq:FREIGHT_E_2}) is implicitly solved at cost~$O(1)$.
		(d)~Both solutions are then evaluated using their \texttt{FREIGHT} scores to determine \hbox{the final solution for Equation~(\ref{eq:FREIGHT})}.}
	\label{fig:DecomposeEquation}
\end{figure}

Now we explain how we solve Equation~(\ref{eq:FREIGHT_E_1})~and~Equation~(\ref{eq:FREIGHT_E_2}).
To solve Equation~(\ref{eq:FREIGHT_E_1}), we use the theoretical complexity outlined in Theorem~\ref{theo:E_1} and solve it explicitly. 
In contrast, Equation~(\ref{eq:FREIGHT_E_2}) is implicitly solved by identifying the block with minimal cardinality.
We use an efficient data structure to keep all blocks sorted by cardinality throughout the entire execution, which enables us to solve \hbox{Equation~(\ref{eq:FREIGHT_E_2}) in constant time}.

\begin{theorem}
	Equation~(\ref{eq:FREIGHT_E_1}) can be solved in time~$O(|I(v)|)$.
	\label{theo:E_1}
\end{theorem}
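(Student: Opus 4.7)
The plan is to exploit the observation that $S_1$ is small: since for every net $e \in I(v)$ only the single block $d_e$ (where the most recently assigned pin of $e$ lives) can contribute a positive count to any $|I^i_{obj}(v)|$, we have $|S_1| \leq |I(v)|$ in both the cut-net and the connectivity variant. Hence it suffices to show that we can (i) enumerate $S_1$ together with the corresponding values $|I^i_{obj}(v)|$, and (ii) evaluate the score $|I^i_{obj}(v)| - \alpha\gamma|V_i|^{\gamma-1}$ on each of these blocks, all in time proportional to $|I(v)|$.

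For step (i) I would maintain, as a persistent auxiliary structure over the whole stream, a counter array $\mathrm{cnt}[0..k-1]$ initialized once to zero, together with a scratch list $T$ of touched block indices. When vertex $v$ arrives I scan $I(v)$ once: for each incident net $e$, read $d_e$ in $O(1)$ (for the cut-net variant, skip $e$ if it is already marked cut), increment $\mathrm{cnt}[d_e]$, and append $d_e$ to $T$ the first time it is seen (this can itself be detected in $O(1)$ by checking whether the counter was zero before the increment). After this pass, $T$ contains exactly the indices of $S_1$ (filtered further to those with $|V_i| < L_{\max}$, which is an $O(1)$ test per block), and $\mathrm{cnt}[i]$ stores $|I^i_{obj}(v)|$ for each $i \in T$. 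Both $|T|$ and the cost of the scan are $O(|I(v)|)$.

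For step (ii) I would iterate once over $T$ and compute the score $\mathrm{cnt}[i] - \alpha\gamma|V_i|^{\gamma-1}$ per block; since $|V_i|$ can be read in $O(1)$ from the block-cardinality table and $\alpha$, $\gamma$ are fixed constants, each score evaluation is $O(1)$, so the argmax over $T$ costs $O(|T|) = O(|I(v)|)$. Finally, I reset the data structure by walking through $T$ once more, zeroing $\mathrm{cnt}[i]$ for each $i \in T$, and clearing $T$; this restores the invariant at amortized cost $O(|I(v)|)$ so that the next vertex finds clean counters without paying an $O(k)$ initialization.

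The one subtlety — and the step I would treat as the main obstacle — is precisely this avoidance of any $O(k)$-work inside the per-vertex routine: a naive use of an explicit ``sum over all blocks'' or a per-vertex reset of the counter array would destroy the bound. The touched-list trick sidesteps this by charging every counter access to a distinct element of $I(v)$, so the total work done per vertex is a constant number of passes over a set of size at most $|I(v)|$, yielding the claimed $O(|I(v)|)$ bound for solving Equation~(\ref{eq:FREIGHT_E_1}).
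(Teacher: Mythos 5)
Your proof is correct and follows essentially the same route as the paper: iterate once over $I(v)$ to accumulate the counts $|I^i_{obj}(v)|$ (skipping cut nets in the cut-net variant), observe $|S_1|\le|I(v)|$, and evaluate each score term in $O(1)$ per block of $S_1$. The touched-list reset you describe is a sensible implementation detail that the paper's proof leaves implicit, but it does not change the argument.
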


\begin{proof}
	The terms \hbox{$|I^i_{obj}(v)|$} in Equation~(\ref{eq:FREIGHT_E_1}) can be computed by iterating through the nets of~$v$ at a cost of \hbox{$O(|I(v)|)$} and determining their status as cut, unassigned, or assigned to a block.
	The calculation of the factors \hbox{$-\alpha * \gamma * |V_i|^{\gamma-1}$} in Equation~(\ref{eq:FREIGHT_E_1}) can be done in time \hbox{$O(|S_1|) = O(|I(v)|)$}, \hbox{thus completing the proof}.
\end{proof}

Now we explain our data structure to keep the blocks sorted by cardinality during the whole algorithm execution.
The data structure is implemented with two arrays~$A$~and~$B$, both with $k$ elements, and a list~$L$.
The array~$A$ stores all $k$~blocks always in ascending order. 
The array~$B$ maps the index~$i$ of a block~$V_i$ to its position in~$A$.
Each element in the list~$L$ represents a bucket.
Each bucket is associated with a unique block cardinality and contains the leftmost and the rightmost positions~$\ell$~and~$r$ of the range of blocks in~$A$ which currently have this cardinality.
Reciprocally, each block in~$A$ has a pointer to the unique bucket in~$L$ corresponding to its cardinality.
To begin the algorithm, $L$ is set up with a single bucket for cardinality~$0$ which covers the $k$ positions of~$A$, i.e., its paramenters $\ell$~and~$r$ are $1$~and~$k$, respectively. 
The blocks in~$A$ are sorted in any order initially, however, as each block starts with a cardinality of $0$, they will be \hbox{ordered by their cardinalities}.

\begin{figure}[t]
	\centering
	\includegraphics[width=0.7\linewidth]{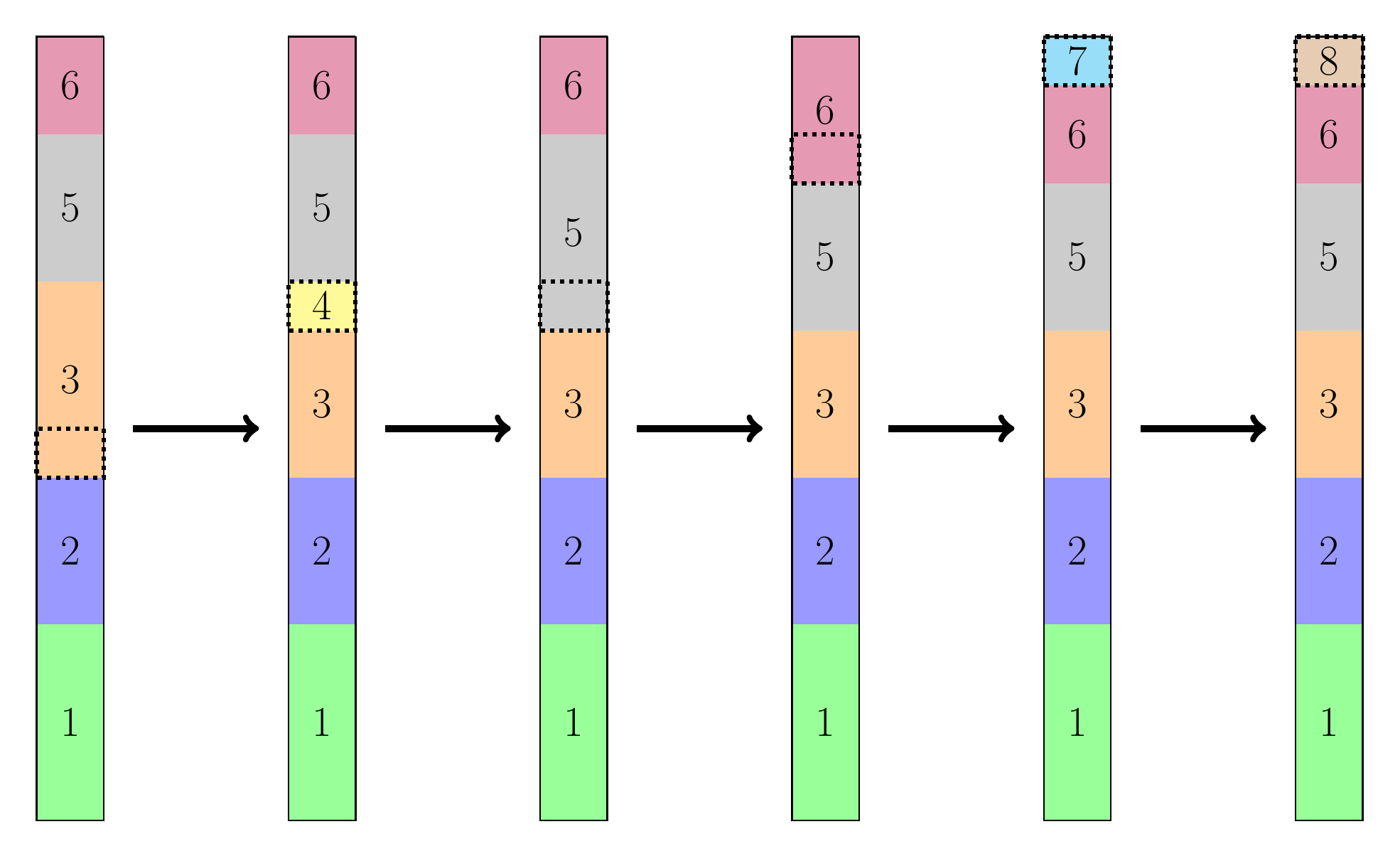}
	\caption{Illustration of our data structure used to keep the blocks sorted by cardinality throughout the execution of \texttt{FREIGHT}. The array~$A$ is represented as a vertical rectangle. Each region of~$A$ is covered by a unique bucket, which is represented by a unique color filling the corresponding region in~$A$. The cardinality associated with each bucket is written in the middle of the region of~$A$ covered by it. Here we represent the behavior of the data structure when assigning vertices \hbox{to the block surrounded by a dotted rectangle five times consecutively}.}
	\label{fig:SortingArray}
\end{figure}

When a vertex is assigned to a block~$V_d$, we update our data structure as detailed in Algorithm~\ref{alg:IncrementCardinalityBlock} and exemplified in Figure~\ref{fig:SortingArray}.
We describe Algorithm~\ref{alg:IncrementCardinalityBlock} in detail now.
In line~1, we find the position~$p$ of~$V_d$ in~$A$ and find the bucket~$C$ associated with it.
In line~2, we exchange the content of two positions in~$A$: the position where~$V_d$ is located and the position identified by the variable $r$~in~$C$, which marks the rightmost block in~$A$ covered by~$C$.
This variable $r$ is afterwards decremented in line~3 since $V_d$ is now not covered anymore by the bucket $C$.
In lines~4~and~5, we check if the new (increased) cardinality of~$V_d$ matches the cardinality of the block located right after it in~$A$.
If so, we associate $V_d$ to the same bucket as it and decrement this bucket's leftmost position~$\ell$ in line~6;
Otherwise, we push a new bucket to~$L$ and match it to $V_d$ adequately in lines~8~and~9.
Finally, in line~10, we delete~$C$ in case its range $[\ell,r]$ is empty.
Figure~\ref{fig:SortingArray} shows our data structure through five consecutive executions of Algorithm~\ref{alg:IncrementCardinalityBlock}.
Theorem~\ref{theo:sorting_correctness} proves the correctness of our data structure. 
Theorem~\ref{theo:E_2} shows that, using our proposed data structure, we need time~$O(1)$ to either solve Equation~(\ref{eq:FREIGHT_E_2}) or prove that the solution for Equation~(\ref{eq:FREIGHT_E_1}) solves Equation~(\ref{eq:FREIGHT}).
\NEW{It should be noted that using a bucket queue would result in the same complexity, but it would require $O(k+L_{\max})$ of memory, whereas our data structure only uses $O(k)$.}
The overall complexity of \texttt{FREIGHT}, which directly follows from Theorem~\ref{theo:E_1} and Theorem~\ref{theo:E_2}, \hbox{is expressed in Corollary~\ref{cor:complexity_without_sampling}}.

\begin{algorithm}[] %
	\caption{Increment cardinality of block $V_d$ in the proposed data structure} %
	\label{alg:IncrementCardinalityBlock} %
	\begin{algorithmic}[1] %
		\State $p \gets B_d$; $C \gets A_p.bucket$;
		\State $q \gets C.r$; $c \gets A_q.id$; $Swap(A_p, A_q)$; $Swap(B_c, B_d)$; 
		\State $C.r \gets C.r - 1$;
		\State $C^\prime \gets A_{q+1}.bucket$;
		\If{$C.cardinality + 1 = C^\prime.cardinality$}
		\State $A_q.bucket \gets C^\prime$; $C^\prime.\ell \gets C^\prime.\ell - 1$;
		\Else
		\State $C^{\prime\prime} \gets NewBucket()$; $A_q.bucket \gets C^{\prime\prime}$; $L \gets L \cup \{C^{\prime\prime}\}$; 
		\State $C^{\prime\prime}.cardinality \gets C.cardinality + 1$; $C^{\prime\prime}.\ell \gets q$; $C^{\prime\prime}.r \gets q$;
		\EndIf
		\State \textbf{if} $C.r = C.\ell$ \textbf{then} $L \gets L \setminus \{C\}$;
	\end{algorithmic}
\end{algorithm}

\begin{theorem}
	Our proposed data structure keeps the blocks within array A consistently sorted in ascending order of cardinality.
	\label{theo:sorting_correctness}
\end{theorem}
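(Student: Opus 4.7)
The plan is to proceed by induction on the number of executions of Algorithm~\ref{alg:IncrementCardinalityBlock} performed since initialization. Rather than prove only sortedness in isolation, I would strengthen the inductive invariant to a conjunction that is preserved by a single call: (i) the blocks stored in $A$ appear in non-decreasing order of cardinality; (ii) every bucket $C \in L$ corresponds to a unique cardinality and to a contiguous range $[C.\ell, C.r]$ in $A$ whose blocks all have exactly that cardinality; (iii) bucket cardinalities, ordered by their positions in $A$, are strictly increasing; and (iv) $A$ and $B$ are mutual inverses, and every block points to its bucket through the $bucket$ field of its entry in $A$.

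The base case is immediate: after initialization, every block has cardinality $0$ and a single bucket covers all $k$ positions of $A$, so all four invariants hold. For the inductive step, I would walk through Algorithm~\ref{alg:IncrementCardinalityBlock} line by line. Line~2 swaps $V_d$, at position $p$, with the block $V_c$ at position $q = C.r$; since both $p$ and $q$ lie in $C$'s range, the two blocks share the same old cardinality by (ii), so the swap preserves (i)--(iv). Line~3 shrinks $C$'s range to $[\ell, r-1]$, which remains consistent because $V_d$ is about to leave $C$. The conditional in lines~5--9 then places $V_d$: by invariant (iii), the bucket $C'$ covering position $q+1$ satisfies $C'.cardinality > C.cardinality$, hence $C'.cardinality \geq C.cardinality + 1$. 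Equality lets us absorb $V_d$ into $C'$ by decrementing $C'.\ell$, which preserves contiguity and invariant~(i) because $V_d$'s new cardinality matches $C'$'s exactly. Strict inequality means we instead insert a fresh bucket of cardinality $C.cardinality + 1$ sitting between $C$ and $C'$, keeping (iii) intact. Finally, line~10 deletes $C$ if it has become empty, restoring uniqueness of the cardinality-to-bucket mapping.

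The main obstacle will be handling the corner case where $V_d$ currently has the largest cardinality in the system, so that position $q+1$ falls outside $A$ and the bucket $C'$ does not exist. The cleanest remedy is to introduce a sentinel bucket of cardinality $+\infty$ that is never removed and is conceptually attached beyond position $k$; this makes the ``strict inequality'' branch fire uniformly and forces a new bucket to be allocated without any special-casing in the proof. Apart from this edge case, the rest is bookkeeping on pointers: one must check that after the swap in line~2 the block $V_c$ now sitting at position $p$ still carries the cardinality of its (shortened) bucket $C$, and that all updates in lines~1--9 respect invariant (iv). Once the four invariants are simultaneously maintained, invariant~(i) immediately yields the claim of Theorem~\ref{theo:sorting_correctness}.
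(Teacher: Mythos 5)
Your proposal is correct and follows essentially the same route as the paper: a simultaneous induction maintaining both sortedness of $A$ and the invariant that each bucket's $[\ell,r]$ range is exactly the contiguous run of blocks with its cardinality (your invariants (i)--(iii) correspond to the paper's claims (a) and (b), with (iv) being implementation bookkeeping). Your explicit sentinel treatment of the boundary case where $q+1$ falls past the end of $A$ addresses a detail the paper's proof leaves implicit, but it does not change the structure of the argument.
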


\begin{proof}
	We inductively prove two claims at the same time: 
	(a)~the variables~$\ell$~and~$r$ contained in each bucket from~$L$ respectively store the leftmost and the rightmost positions of the unique range of blocks in~$A$ which currently have this cardinality;
	(b)~the array~$A$ contains the blocks sorted in ascending order of cardinality.
	Both claims are trivially true at the beginning, since all blocks have cardinality 0 and~$L$ is initialized with a single bucket with $\ell=1$ and $r=k$.
	Now assuming that (a)~and~(b) are true at some point, we show that they keep being true after Algorithm~\ref{alg:IncrementCardinalityBlock} is executed.
	Note that line~2 performs the only position exchange in~$A$ throughout the whole algorithm.
	As~(a) is assumed, it is the case that~$V_d$ swaps positions with the rightmost block in $A$ containing the same cardinality of~$V_d$.
	Since the cardinality of~$V_d$ will be incremented by one and all blocks have integer cardinalities, this concludes the proof of~(b).
	To prove that~(a) remains true, note that the only buckets in~$L$ that are modified are~$C$ (line~3),~$C^\prime$ (line~6), and~$C^{\prime\prime}$ (line~9).
	Claim~(a) remains true for~$C$ because $V_d$, whose cardinality will be incremented, is the only block removed from its range.
	Claim~(a) remains true for~$C^\prime$ because line~6 is only executed if the new cardinality of~$V_d$ equals the cardinality of~$C^\prime$, whose current range starts right after the new position of~$V_d$ in~$A$.
	Bucket~$C^{\prime\prime}$ is only created if the new cardinality of~$V_d$ is respectively larger and smaller than the cardinalities of~$C$~and~$C^\prime$.
	Since~(b) is true, then this condition only happens if there is no block in~$A$ with the same cardinality as the new cardinality of~$V_d$.
	Hence, claim~(a) remains true for~$C^{\prime\prime}$, which is created \hbox{covering only the position of~$V_d$ in~$A$}.
\end{proof}

\begin{theorem}
	By utilizing our proposed data structure, solving Equation~(\ref{eq:FREIGHT_E_2}) or demonstrating that any solution for Equation~(\ref{eq:FREIGHT_E_1}) is also a solution for Equation~(\ref{eq:FREIGHT}) can be accomplished~in~$O(1)$~time.
	\label{theo:E_2}
\end{theorem}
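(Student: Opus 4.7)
The plan is to exploit the sorted-by-cardinality invariant established in Theorem~\ref{theo:sorting_correctness}: the leftmost entry $A_1$ of $A$ always points to a block $V_j$ of globally minimum cardinality $c_{\min}$, and this entry is reached in $O(1)$ time via a direct pointer. Assuming the instance admits any balance-feasible assignment, we have $|V_j| < L_{\max}$ and hence $V_j \in S_1 \cup S_2$. While iterating over the nets in $I(v)$ to solve Equation~(\ref{eq:FREIGHT_E_1}), I would already maintain an auxiliary marker indicating, for every block touched, whether it has accumulated $|I^i_{obj}(v)| > 0$; with this marker, testing membership of $V_j$ in $S_1$ is a single array lookup and adds no extra asymptotic cost on top of the work charged by Theorem~\ref{theo:E_1}.

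Next I would perform a two-case analysis. If $V_j \in S_2$, then because $V_j$ has globally minimum cardinality it is in particular a minimizer of $|V_i|$ over $i \in S_2$, so the second equality in Equation~(\ref{eq:FREIGHT_E_2}) identifies it as a solution and nothing further is needed. If instead $V_j \in S_1$, I would argue that every solution of Equation~(\ref{eq:FREIGHT_E_1}) is automatically a solution of Equation~(\ref{eq:FREIGHT}). The key comparison is that the score of $V_j$ in the full objective is at least $1 - \alpha \gamma \, c_{\min}^{\gamma-1}$ (since $|I^j_{obj}(v)| \geq 1$), whereas every $V_i \in S_2$ satisfies $|V_i| \geq c_{\min}$ and thus scores exactly $-\alpha \gamma \, |V_i|^{\gamma-1} \leq -\alpha \gamma \, c_{\min}^{\gamma-1}$. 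The gap of at least $1$ shows that no block in $S_2$ can beat $V_j$, and a fortiori none can beat the $\argmax$ taken over the whole of $S_1$.

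The main obstacle I anticipate is not algorithmic but rather in cleanly packaging the dominance argument so that it covers the $S_1$ case uniformly, regardless of which block in $S_1$ actually attains the maximum in Equation~(\ref{eq:FREIGHT_E_1}); comparing through $V_j$ and $c_{\min}$ is precisely the trick that avoids having to inspect the rest of $S_2$. A secondary subtlety is the boundary case in which even $V_j$ violates $|V_j| < L_{\max}$, i.e.\ no block has room; this is an infeasible instance and would be dispatched as a separate preliminary check rather than folded into the main $O(1)$ analysis.
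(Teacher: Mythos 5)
Your proposal is correct and follows essentially the same route as the paper's proof: access the minimum-cardinality block in $O(1)$ via the first position of the sorted array, and case-split on whether it lies in $S_2$ (where it directly solves Equation~(\ref{eq:FREIGHT_E_2})) or in $S_1$ (where its score exceeds that of every block in $S_2$ by at least $1$, so the solution of Equation~(\ref{eq:FREIGHT_E_1}) solves Equation~(\ref{eq:FREIGHT})). Your explicit dominance inequality through $c_{\min}$ and the feasibility caveat are just slightly more detailed renderings of the same argument.
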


\begin{proof}
	Algorithm~\ref{alg:IncrementCardinalityBlock} contains no loops and each command in it has a complexity of $O(1)$, thus the total cost of the algorithm is $O(1)$.
	Our data structure executes Algorithm~\ref{alg:IncrementCardinalityBlock} once for each assigned vertex, hence it costs $O(1)$ per vertex.
	Say we are evaluating an incoming vertex~$v$.
	According to Theorem~\ref{theo:sorting_correctness}, the block~$V_d$ with minimum cardinality is stored in the first position of the array~$A$, hence it can be accessed in time $O(1)$.
	In~case~\hbox{$V_d \in S_2$}, then~$d$ is a solution for Equation~(\ref{eq:FREIGHT_E_2}).
	On the other hand, if $V_d$ is in $S_1$, the \texttt{FREIGHT} score of $V_d$ will be larger than the \texttt{FREIGHT} score of the solution for Equation~(\ref{eq:FREIGHT_E_2}) by at least \hbox{$|I_d(v)| > 0$}.
	In this case, it follows that \hbox{any solution for Equation~(\ref{eq:FREIGHT_E_1}) solves Equation~(\ref{eq:FREIGHT})}.
\end{proof}

\begin{corollary}
	The overall complexity of \texttt{FREIGHT} is \hbox{$O\big(\sum_{e \in E}{|e|} + n\big)$}.
	\label{cor:complexity_without_sampling}
\end{corollary}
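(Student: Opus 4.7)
The plan is to derive the overall complexity by accounting separately for (i) the cost of streaming the hypergraph, and (ii) the per-vertex processing cost, and then summing the latter over all $n$ incoming vertices. According to the streaming model described in Section~\ref{subsec:Models for Streaming Hypergraphs}, loading all vertices together with their incident nets already costs $O(\sum_{e\in E}|e| + n)$, so it suffices to show that the per-vertex computational work also fits within this bound.

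For each streamed vertex $v$, FREIGHT performs two tasks on top of reading $v$ and its incident nets $I(v)$: it solves Equation~(\ref{eq:FREIGHT_E_1}) on the set $S_1$, and it either solves Equation~(\ref{eq:FREIGHT_E_2}) or certifies that the optimum over $S_1$ already solves Equation~(\ref{eq:FREIGHT}); it then updates the sorted-block data structure by a single call to Algorithm~\ref{alg:IncrementCardinalityBlock}. By Theorem~\ref{theo:E_1}, the first task costs $O(|I(v)|)$. By Theorem~\ref{theo:E_2}, the second task together with the data-structure update costs $O(1)$. Hence the processing cost for $v$ is $O(|I(v)| + 1)$.

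Summing over all $n$ vertices gives $O\bigl(\sum_{v\in V}|I(v)| + n\bigr)$. The key identity to invoke here is that $\sum_{v\in V}|I(v)| = \sum_{e\in E}|e|$, since both quantities count the number of incidences between vertices and nets (equivalently, the pin count). Combining this with the streaming cost yields the desired bound $O\bigl(\sum_{e\in E}|e| + n\bigr)$.

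There is no real obstacle to overcome here: once Theorems~\ref{theo:E_1} and~\ref{theo:E_2} are in place, the corollary is essentially a bookkeeping exercise. The only subtlety worth pointing out explicitly is the pin-counting identity above, which ensures that the additive per-vertex term $|I(v)|$ aggregates exactly to the pin-sum $\sum_{e\in E}|e|$ that appears in the statement, rather than to something larger such as $\sum_v d_w(v)$ or $n\Delta$.
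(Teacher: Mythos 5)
Your proposal is correct and follows essentially the same route as the paper, which derives the corollary directly from Theorem~\ref{theo:E_1} and Theorem~\ref{theo:E_2} by adding the streaming cost to the per-vertex work. Your explicit mention of the pin-counting identity $\sum_{v\in V}|I(v)| = \sum_{e\in E}|e|$ is a sound piece of bookkeeping that the paper leaves implicit.
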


\newcommand{ \scaleFactorSmall} {0.49}
\newcommand{ \imgScaleFactorSmall} {1}
\newcommand{ \capPositionSmall} {-.4cm}
\newcommand{ \afterCapSmall} {-.45cm}

\section{Experimental Evaluation}
\label{sec:Experimental Evaluation}

\vspace*{-0.25cm}
\subparagraph*{Setup.} 
We performed our implementations in C++ and compiled them using gcc 11.2 with full optimization turned on (-O3 flag). 
Unless mentioned otherwise, all experiments are performed on a single core of a machine consisting of a sixteen-core Intel Xeon  Silver 4216 processor running at $2.1$ GHz, $100$ GB of main memory, $16$ MB of L2-Cache, and $22$ MB of L3-Cache running Ubuntu 20.04.1. 
The machine can handle 32 threads with hyperthreading. %
Unless otherwise mentioned we stream (hyper)graphs directly from the internal memory to obtain clear running time comparisons. 
However, note that \texttt{FREIGHT} as well as most of the other used algorithms can also be run streaming \hbox{the hypergraphs from hard~disk}.

\vspace*{-.25cm}
\subparagraph*{Baselines.}
We compare \texttt{FREIGHT} against various state-of-the-art algorithms.
In this section we will list these algorithms and explain our criteria for algorithm selection.
We have implemented \texttt{Hashing}  in C++, since it is a simple algorithm.
\NEW{It basically consists of hashing the IDs of incoming vertices into $\{1,\ldots,k\}$.}
The remaining algorithms were obtained either from official repositories or privately from the authors, with the exception of  \texttt{Min-Max}, for which there is no official implementation available.
Here, we use the \texttt{Min-Max} implementations by Ta{\c s}yaran~et~al.~\cite{tacsyaran2021streaming}. 
\hbox{All algorithms were compiled with gcc 11.2}.

We run \texttt{Hashing}, \texttt{Min-Max}~\cite{alistarh2015streaming} and all its improved versions proposed by Ta{\c s}yaran~et~al.~\cite{tacsyaran2021streaming}:
\texttt{Min-Max-BF}, \texttt{Min-Max-N2P}, \texttt{Min-Max-L$\ell$}, \texttt{Min-Max-MH}, \texttt{REF}, \texttt{REF\_RLX}, and \texttt{REF\_RLX\_SV}.
(see Section~\ref{subsec:related_work} for details on the different \texttt{Min-Max} versions), \texttt{HYPE}~\cite{HYPE2018}, and \texttt{PaToH}  v3.3~\cite{ccatalyurek2011patoh}.
\texttt{Hashing}  is relevant because it is the simplest and fastest streaming algorithm, which gives us a lower bound for partitioning time.
\texttt{Min-Max} is a current state-of-the-art for streaming hypergraph partitioning in terms of cut-net and connectivity.
The improved and buffered versions of \texttt{Min-Max} proposed in~\cite{tacsyaran2021streaming} are relevant because some of them are orders of magnitude faster than \texttt{Min-Max} while others produce improved partitions in comparison to it.
\texttt{HYPE}   and \texttt{PaToH}  are in-memory algorithms for hypergraph partitioning, hence they are not suitable for the streaming setting.
However, we compare against them because \texttt{HYPE}   is among the fastest in-memory algorithms while \texttt{PaToH}  is very fast and also computes partitions with very good cut-net and connectivity.
Note that KaHyPar~\cite{schlag2016k} is the leading tool with respect to solution quality, however it is also much slower than \texttt{PaToH}.

\vspace*{-.25cm}
\subparagraph*{Instances.}
We selected hypergraphs from various sources to test our algorithm.
The considered hypergraphs were used for benchmark in previous works on hypergraph partitioning.
Prior to each experiment, we converted all hypergraphs to the appropriate streaming formats required by each algorithm.
We removed parallel and empty hyperedges and self loops, and assigned unitary weight to all vertices and hyperedges.
In all experiments with streaming algorithms, we stream the hypergraphs with the natural given order of the vertices.
We use a number of blocks \NEW{$k \in \{512,1024,1536,2048,2560\}$} 
unless mentioned otherwise.
We allow a fixed imbalance of $3\%$ for all experiments (and all algorithms) since this is a frequently used value in the partitioning literature. 
All algorithms always generated balanced partitions, except for \texttt{HYPE}  
which generated highly \hbox{unbalanced partitions in around $5\%$ of its experiments}.

We use the same benchmark as in~\cite{schlag2016k}.
This consists of 310~hypergraphs from three benchmark sets: 
18 hypergraphs from the ISPD98 Circuit Benchmark Suite~\cite{alpert1998ispd98}, 
192 hypergraphs based on the University of Florida Sparse Matrix Collection~\cite{davis2011university}, and 
100 instances from the international SAT Competition 2014~\cite{Belov2014}. 
The SAT instances were converted into hypergraphs by mapping each boolean variable and its complement to a vertex and each clause to a net. 
From the Sparse Matrix Collection, one matrix was selected for each application area that had between \numprint{10000} and \numprint{10000000} columns. 
The matrices were converted into hypergraphs using the row-net model, in which each row is \hbox{treated as a net and each column as a vertex}.

\vspace*{-0.25cm}
\subparagraph*{Methodology.}
Depending on the focus of the experiment, we measure running time, cut-net, and-or connectivity.
We perform 5 repetitions per algorithm and instance using random seeds for non-deterministic algorithms, and calculate the arithmetic average of the computed objective function and running time per instance.
When further averaging over multiple instances, we use the geometric mean in order to give every instance \hbox{the same influence on the \textit{final score}}. 

Given a result of an algorithm~$A$, we express its value $\sigma_A$ (which can be objective or running time) as \emph{improvement} over an algorithm~$B$, computed as $\big(\frac{\sigma_B}{\sigma_A}-1\big)*100\%$;
We also use \emph{performance profiles} to represent results.
They relate the running time (quality) of a group of algorithms to the fastest (best) one on a per-instance basis (rather than grouped by $k$).
The x-axis shows a factor $\tau$ while the y-axis shows the percentage of instances for which A has up to $\tau$ times the running time (quality) of the fastest (best)~algorithm.
Bar charts and boxplots are also employed to represent our findings.
We use bar charts to visualize the average value of an objective function in relation to $k$, where each algorithm is represented by vertical bars of a given color with origin on the x-axis. 
The bars for every value of $k$ have a common origin and are arranged in terms of their height, allowing all heights to be visible. 
We use boxplots to give a clear picture of the dataset distribution by displaying the minimum, maximum, median, first and third quartiles, \hbox{while disregarding outliers}.

\begin{figure*}[p!]
	\captionsetup[subfigure]{justification=centering}
	\centering
	\begin{subfigure}[]{\scaleFactorSmall\textwidth}
		\centering
		\includegraphics[width=\imgScaleFactorSmall\textwidth]{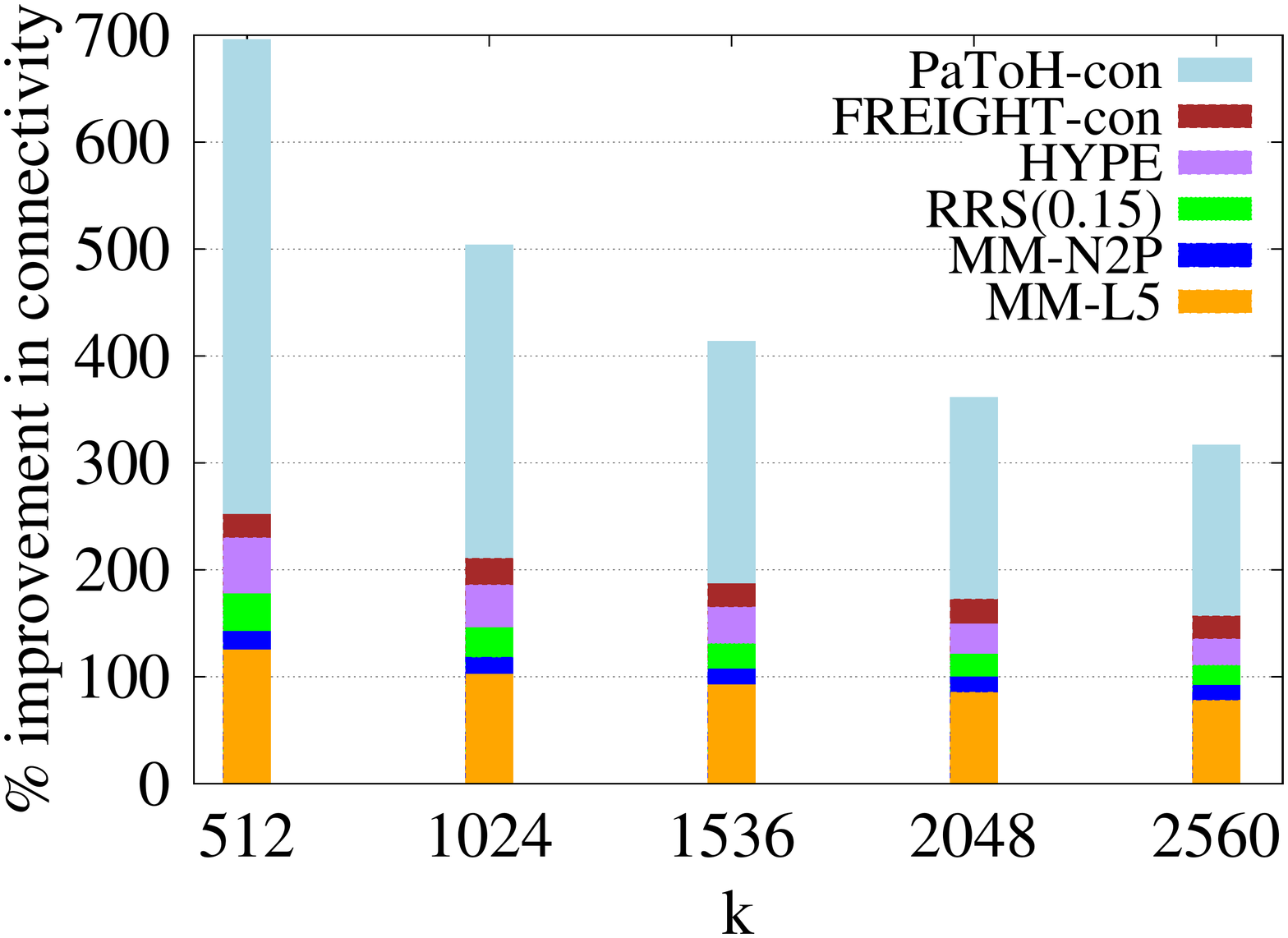}
		\vspace*{\capPositionSmall}
		\caption{Connectivity ioH.}
		\label{fig:connectivity_impr}
	\end{subfigure}%
	\begin{subfigure}[]{\scaleFactorSmall\textwidth}
		\centering
		\includegraphics[width=\imgScaleFactorSmall\textwidth]{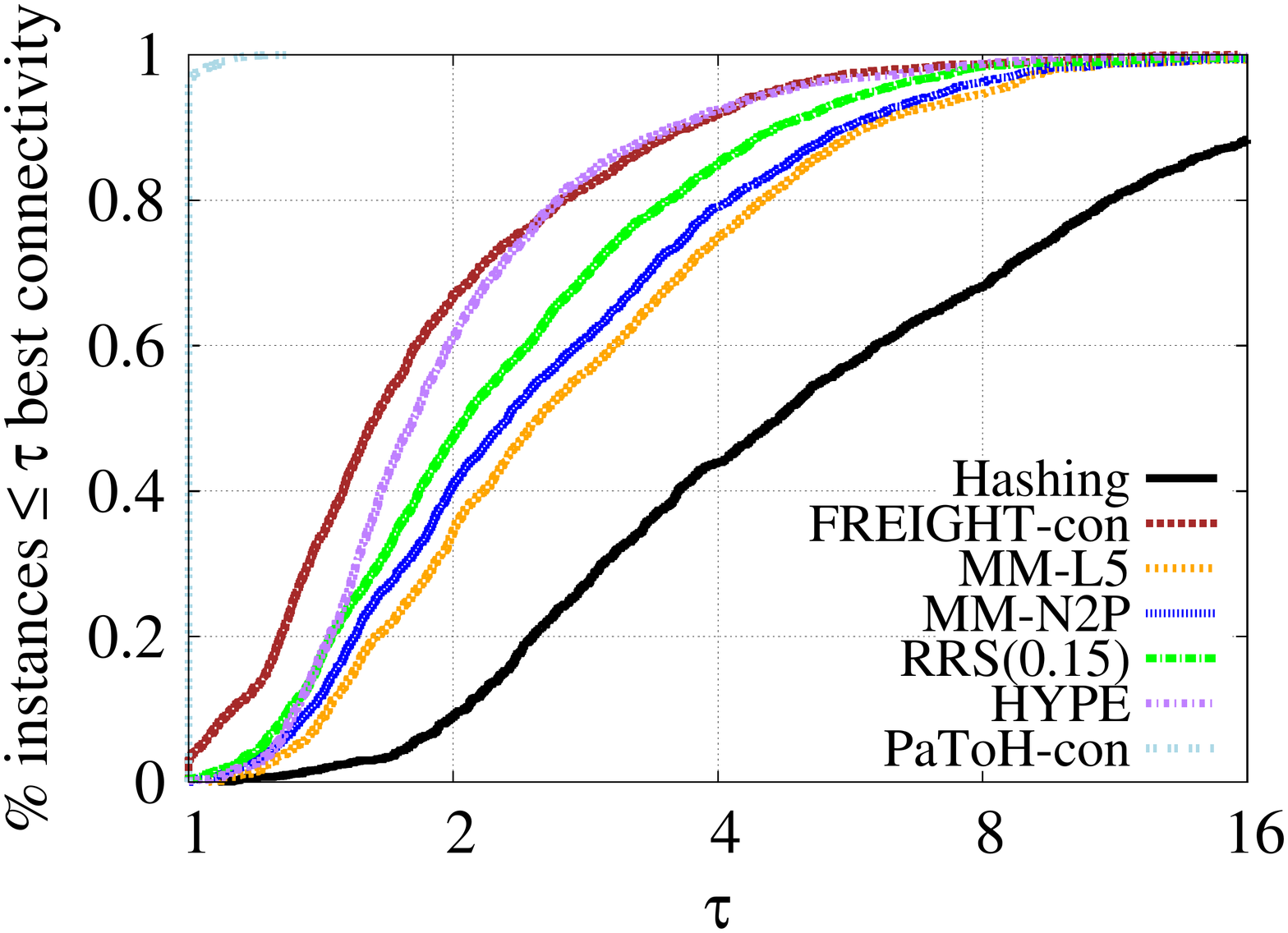}
		\vspace*{\capPositionSmall}
		\caption{Connectivity pp.}
		\label{fig:connectivity_pp}
	\end{subfigure}%

	\begin{subfigure}[]{\scaleFactorSmall\textwidth}
		\centering
		\includegraphics[width=\imgScaleFactorSmall\textwidth]{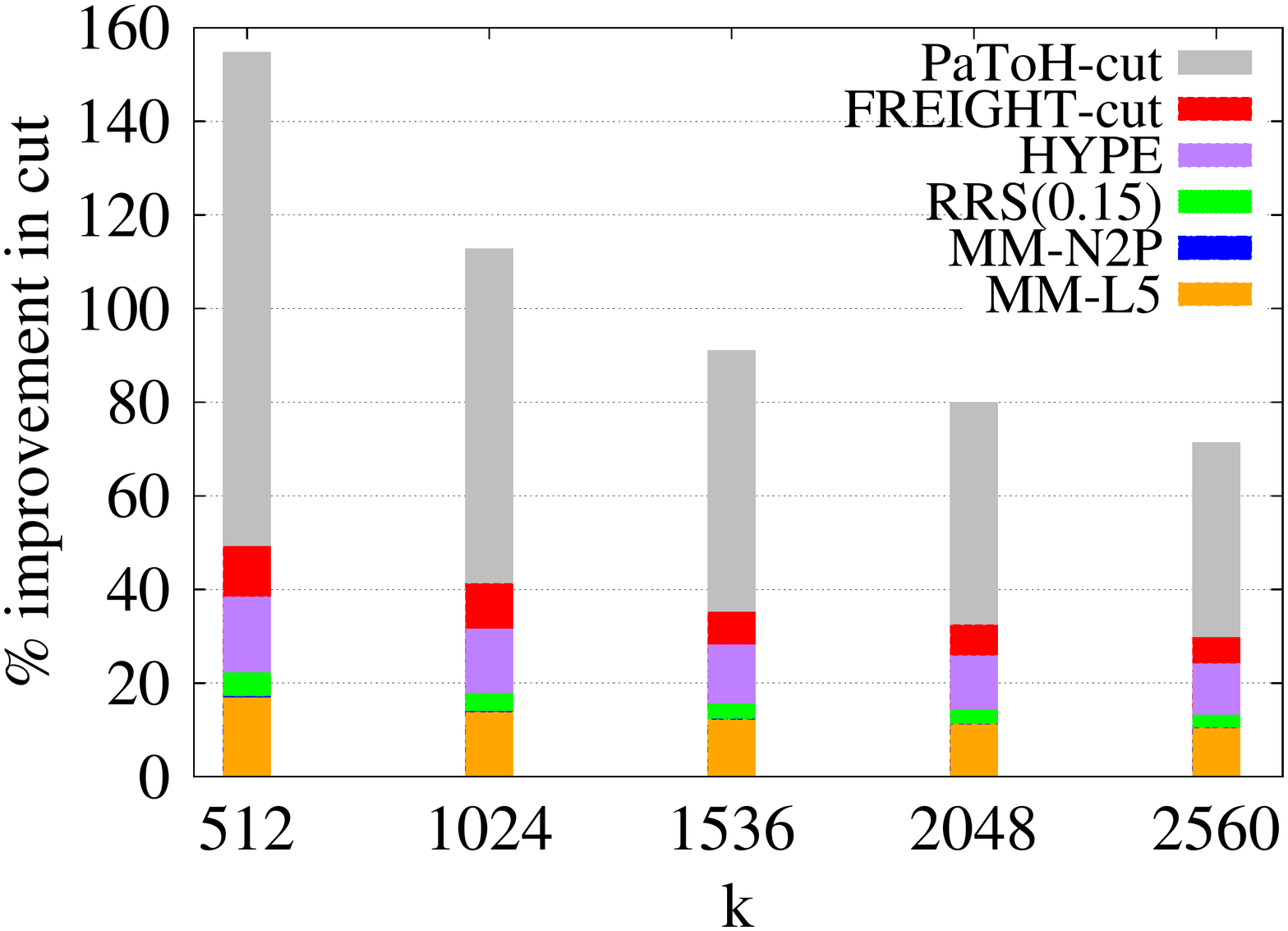}
		\vspace*{\capPositionSmall}
		\caption{Cut-net ioH.}
		\label{fig:cut_impr}
	\end{subfigure}%
	\begin{subfigure}[]{\scaleFactorSmall\textwidth}
		\centering
		\includegraphics[width=\imgScaleFactorSmall\textwidth]{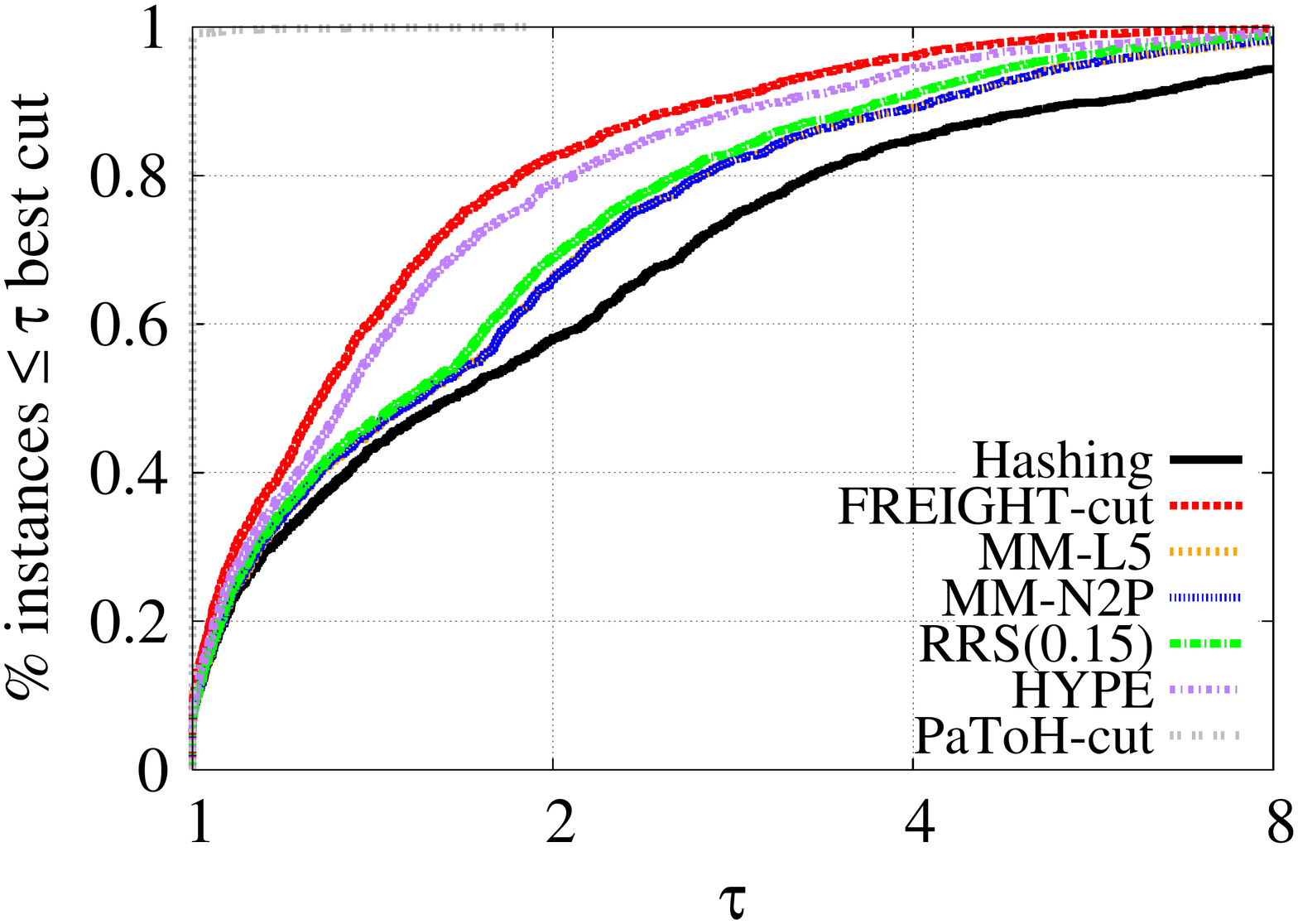}
		\vspace*{\capPositionSmall}
		\caption{Cut-net pp.}
		\label{fig:cut_pp}
	\end{subfigure}

	\begin{subfigure}[]{\scaleFactorSmall\textwidth}
		\centering
		\includegraphics[width=\imgScaleFactorSmall\textwidth]{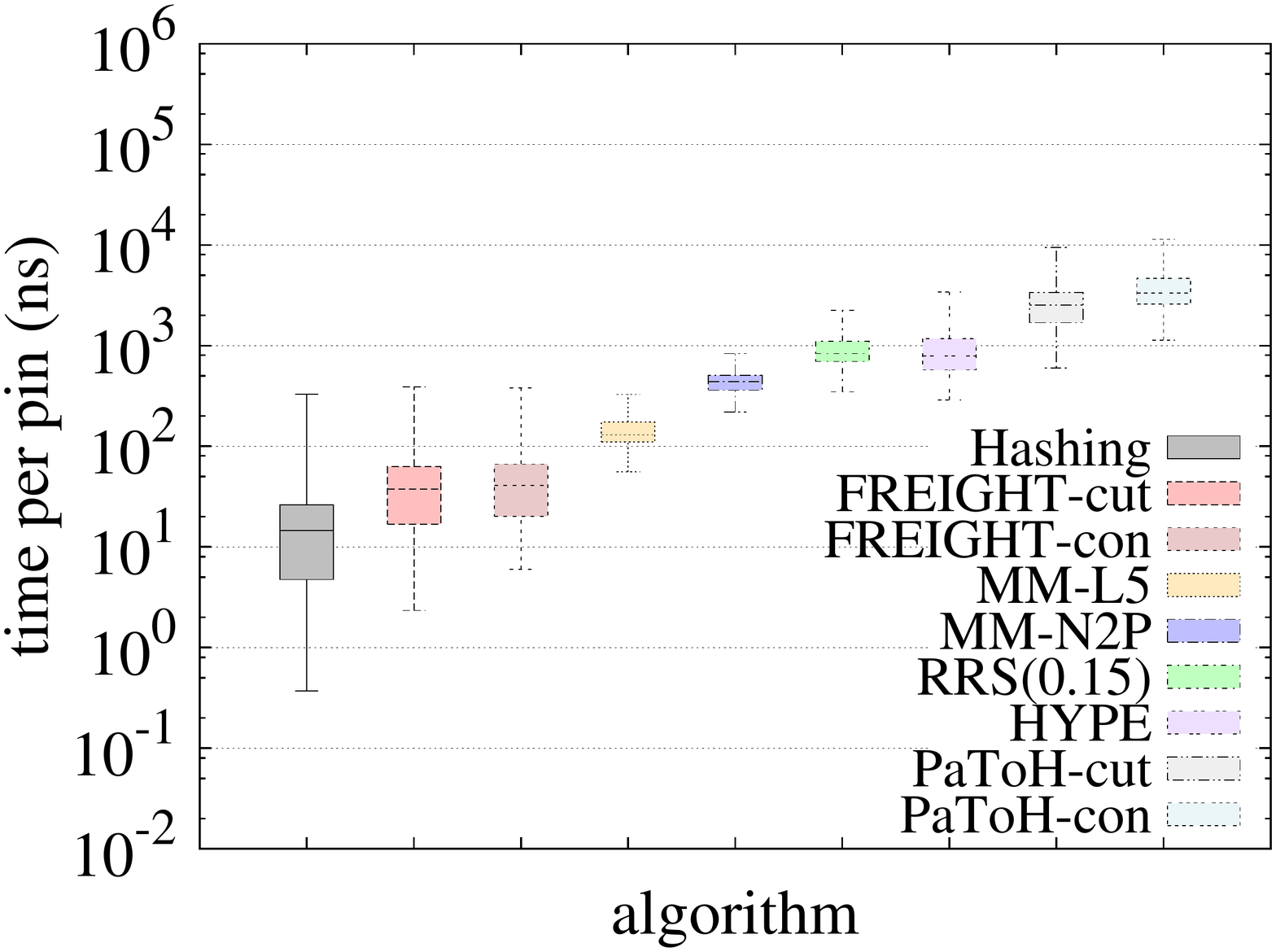}
		\vspace*{\capPositionSmall}
		\caption{Running time bp.}
		\label{fig:time_bp}
	\end{subfigure}
	\begin{subfigure}[]{\scaleFactorSmall\textwidth}
		\centering
		\includegraphics[width=\imgScaleFactorSmall\textwidth]{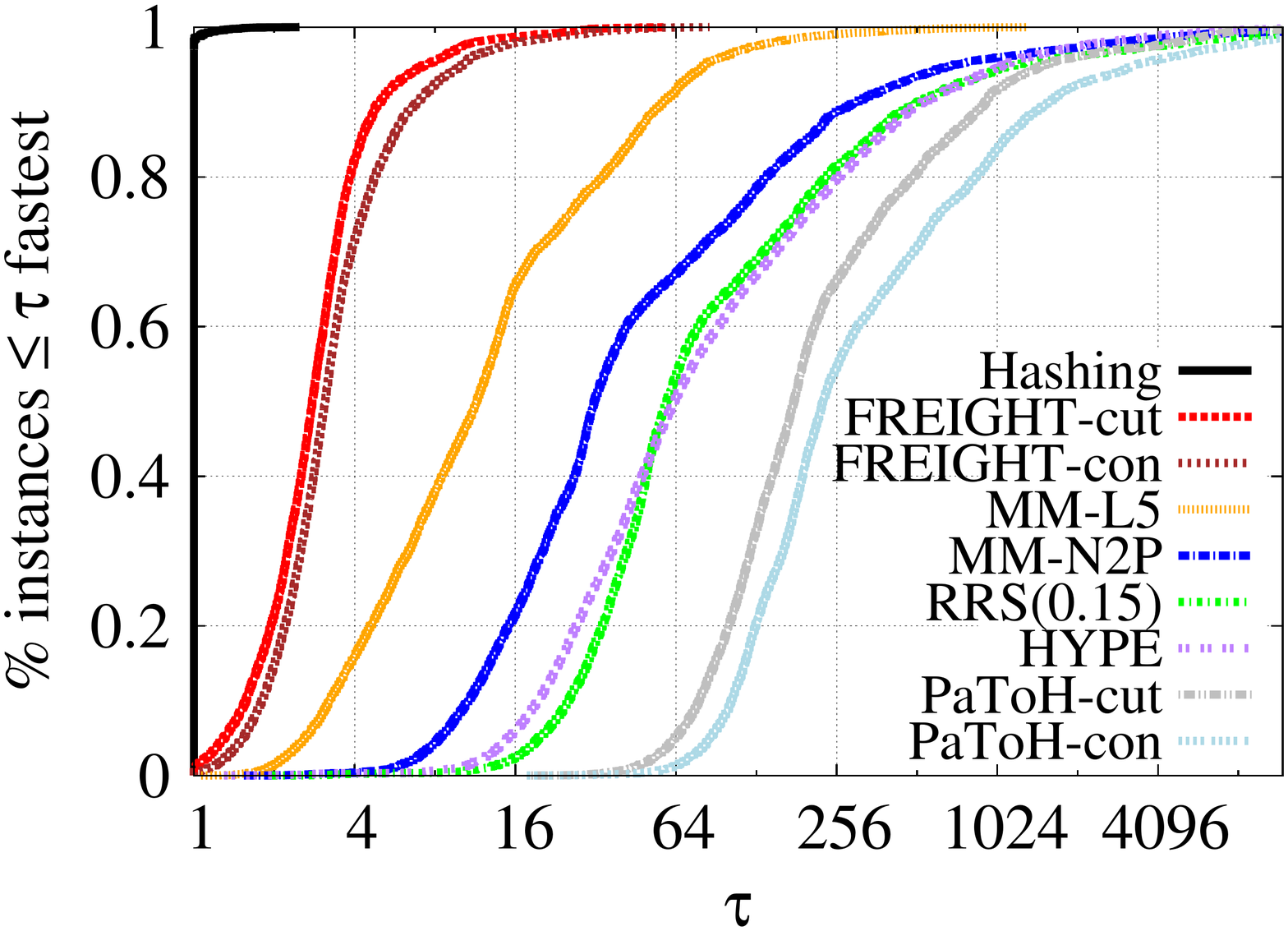}
		\vspace*{\capPositionSmall}
		\caption{Running time pp.}
		\label{fig:time_pp}
	\end{subfigure}%
	
	\vspace*{0.5cm}
	
	\caption{Comparison against the state-of-the-art streaming algorithms for hypergraph partitioning. We show performance profiles~(pp), improvement plots over \texttt{Hashing}~(ioH), and boxplots~(bp). Note that \texttt{PaToH-con}, \texttt{PaToH-cut}, and \texttt{Hashing}  align almost perfectly with the y-axis in Figures~\ref{fig:connectivity_pp},~\ref{fig:cut_pp}, and~\ref{fig:time_pp}, respectively. Also the curves and bars of \texttt{MM-N2P} and \texttt{MM-L5} roughly overlap \hbox{with one another in Figure~\ref{fig:cut_pp} and Figure~\ref{fig:cut_impr}}.}
	\label{fig:state-of_the_art}
	
\end{figure*}

 \vspace*{-.25cm}
\subsection{Results}
\label{subsec:Results}

In this section, we show experiments in which we compare \texttt{FREIGHT} against the current state-of-the-art of streaming hypergraph partitioning.	
As already mentioned, we also use two internal-memory algorithms~\cite{HYPE2018,ccatalyurek2011patoh} as more general baselines for comparison. 
We focus our experimental evaluation on the comparison of solution quality and running time.
Observe that \texttt{PaToH}  and \texttt{FREIGHT} have distinct versions designed to optimize for each quality metric (i.e., connectivity and cut-net).
For a meaningful comparison, we only take into account the relevant version when dealing with each quality metric, however, both versions are still considered for running time comparisons.
To differentiate between the versions, suffixes \texttt{-con} and \texttt{-cut} are added to represent the connectivity-optimized and cut-net versions respectively.
For clarity, we refrain from discussing state-of-the-art streaming algorithms that are \emph{dominated} by another algorithm.
We define a dominated algorithm as one that has worse running time compared to another without offering a superior solution quality in return, or vice-versa.
In particular, we leave out \texttt{Min-Max} and \texttt{Min-Max-BF} since they are dominated by \texttt{Min-Max-N2P}, which is referred to as \texttt{MM-N2P} hereafter. %
Similarly, we omit \texttt{Min-Max-MH} 
because \NEW{it is} dominated by \texttt{Hashing}.
We use a buffer size of $15\%$ for testing the buffered algorithms \texttt{REF}, \texttt{REF\_RLX}, and \texttt{REF\_RLX\_SV}, following the best results outlined in~\cite{tacsyaran2021streaming}. %
We omit the first two of them since they are dominated by the latter one, which is referred to as \texttt{RRS(0.15)}   from now on.
Since \texttt{Min-Max-L$\ell$} is not dominated by any other algorithm, we exhibit its results with $\ell=5$, as seen in the best results in~\cite{tacsyaran2021streaming}, and \hbox{we refer to it as \texttt{MM-L5} from this point}.

\vspace*{-0.25cm}
\subparagraph*{Connectivity.}
We start by looking at the connectivity metric. 
In Figure~\ref{fig:connectivity_impr}, we plot the average connectivity improvement over \texttt{Hashing}  for each value of~$k$. 
\texttt{PaToH-con}  produces the best connectivity on average, yielding an average improvement of $443\%$ when compared to \texttt{Hashing}.
This is in line with previous works in the area of (hyper)graph partitioning, i.e. streaming algorithms typically compute worse solutions than internal memory algorithms, which have access to the whole graph. 
\texttt{FREIGHT-con} is found to be the second best algorithm in terms of connectivity, outperforming both the internal memory algorithm \texttt{HYPE}   and the buffered streaming algorithm \texttt{RRS(0.15)}.
On average, these three algorithms improve $194\%$, $171\%$, and $136\%$ over \texttt{Hashing}, respectively.
Finally, \texttt{MM-N2P} and \texttt{MM-L5} compute solutions which improve $111\%$ and $96\%$ over \texttt{Hashing}  on average, respectively.
In direct comparison, \texttt{FREIGHT-con} shows average connectivity improvements of $8\%$, $24\%$, $39\%$, and $50\%$ over \texttt{HYPE}, \texttt{RRS(0.15)}, \texttt{MM-N2P}, and \texttt{MM-L5}, respectively.
Note that each algorithm \hbox{retains its relative ranking in terms of average connectivity over all values of~$k$}.

In Figure~\ref{fig:connectivity_pp}, we plot connectivity performance profiles across all experiments.
\texttt{PaToH-con}  produces the best overall connectivity for $96.4\%$ of the instances, while \texttt{FREIGHT-con} produces the best connectivity for $3.1\%$ of the instances and no other algorithm computes the best connectivity for more than $0.35\%$ of the instances.
The connectivity produced by  \texttt{FREIGHT-con}, \texttt{HYPE}, \texttt{RRS(0.15)}, \texttt{MM-N2P}, \texttt{MM-L5}, and \texttt{Hashing}  are within a factor~$2$ of the best found connectivity for $67\%$, $61\%$, $47\%$, $41\%$, $34\%$, and $9\%$ of the instances, respectively.
In summary, \texttt{FREIGHT-con} produces the best connectivity among (buffered) streaming competitors, \hbox{outperforming even in-memory algorithm \texttt{HYPE}}.

\vspace*{-0.25cm}
\subparagraph*{Cut-Net.}
Next we examine at the cut-net metric.
In Figure~\ref{fig:cut_impr}, we plot the cut-net improvement over \texttt{Hashing}.
\texttt{PaToH-cut}  produces the best overall cut-net, with an average improvement of $100\%$ compared to \texttt{Hashing}.
\texttt{FREIGHT-cut} is found to be the second best algorithm with respect to cut-net, superior to internal-memory algorithm \texttt{HYPE}   and buffered streaming algorithm \texttt{RRS(0.15)}.
These three algorithms improve connectivity over \texttt{Hashing}  by $37\%$, $30\%$, and $17\%$ respectively.
Finally, both \texttt{MM-N2P} and \texttt{MM-L5} improve connectivity by $13\%$ on average over \texttt{Hashing}.
In direct comparison, \texttt{FREIGHT-cut} shows average connectivity improvements of $6\%$, $18\%$, $22\%$, and $22\%$ over \texttt{HYPE}, \texttt{RRS(0.15)}, \texttt{MM-N2P}, and \texttt{MM-L5}, respectively.
Each algorithm \hbox{preserves its relative ranking in average \NEW{cut-net} across all values of~$k$}.

In Figure~\ref{fig:cut_pp}, we plot cut-net performance profiles across all experiments.
In the plot, \texttt{PaToH-cut}  produces the best overall connectivity for $98.0\%$ of the instances, while \hbox{\texttt{FREIGHT-cut}} and \texttt{HYPE}   produce the best cut-net for $6.8\%$ and $5.2\%$ of the instances and all other streaming algorithms (\texttt{RRS(0.15)}, \texttt{MM-N2P}, \texttt{MM-L5}, and \texttt{Hashing}) produce the best cut-net for $4.8\%$ of the instances.
The cut-net results produced by  \texttt{FREIGHT-cut}, \texttt{HYPE}, \texttt{RRS(0.15)}, \texttt{MM-N2P}, \texttt{MM-L5}, and \texttt{Hashing}  are within a factor~$2$ of the best found cut-net for $83\%$, $79\%$, $69\%$, $66\%$, $66\%$, and $58\%$ of the instances, respectively.
This shows that \texttt{FREIGHT-cut} produces the best cut-net among all (buffered) streaming competitors and \hbox{even beats the in-memory algorithm \texttt{HYPE}}.

\vspace*{-0.25cm}
\subparagraph*{Running Time.}
Now we compare the algorithms' runtime.
Boxes and whiskers in Figure~\ref{fig:time_bp} display the distribution of the running time per pin, measured in nanoseconds, for all instances.
\texttt{Hashing}, \texttt{FREIGHT-cut}, and \texttt{FREIGHT-con} are the three fastest algorithms, with median runtimes per pin of \numprint{15}ns, \numprint{38}ns, and \numprint{41}ns, respectively.
\texttt{MM-L5}, \texttt{MM-N2P}, \texttt{HYPE}, and \texttt{RRS(0.15)}   follow with median runtimes per pin of \numprint{130}ns, \numprint{437}ns, \numprint{792}ns, and \numprint{833}ns, respectively.
Lastly, the algorithms with the \NEW{highest} median runtime per pin are \texttt{PaToH-cut}  and \texttt{PaToH-con}, with \numprint{2516}ns and \numprint{3333}ns respectively.
The measured runtime per pin for both \texttt{HYPE}   and \texttt{PaToH}  align with \hbox{values reported in prior research~\cite{schlag2020high}}.

In Figure~\ref{fig:time_pp}, we show running time performance profiles.
\texttt{Hashing}  is the fastest algorithm for $98.3\%$ of the instances, while \texttt{FREIGHT-cut} is the fastest one for $1.2\%$ of the instances and no other algorithm is the fastest one for more than $0.4\%$ of the instances. 
The running time of \texttt{FREIGHT-cut} and \texttt{FREIGHT-con} is within a factor 4 of that of \texttt{Hashing}  for $82\%$ and $72\%$ of instances, respectively. 
In contrast, for only $16\%$ of instances does this occur for \texttt{MM-L5}, and for less than $0.4\%$ of instances for all other algorithms.
The close running times of \texttt{FREIGHT} to \texttt{Hashing}  are surprising given \texttt{FREIGHT}'s superior solution quality compared to \texttt{Hashing}  and \hbox{all other streaming algorithms and even \texttt{HYPE}}.

\vspace*{-.25cm}
\subparagraph*{Further Comparisons.}
\label{subpar:Further Comparison}
For graph vertex partitioning \texttt{FREIGHT} and \texttt{Fennel} are mathematically equivalent. 
However, \texttt{FREIGHT} exhibits a lower computational complexity of $O(m+n)$ compared to the standard implementation of \texttt{Fennel}, which has a complexity of $O(m+nk)$ due to evaluating all blocks for each node. 
To optimize its performance for this use case, we have implemented an optimized version of \texttt{FREIGHT} with a memory consumption of $O(n+k)$, matching that of \texttt{Fennel}. 
In our experiments, we utilized the same graphs as in~\cite{StreamMultiSection} and tested with $k \in \{512,1024,1536,2048,2560\}$.
On average, \texttt{FREIGHT} proves to be 109 times faster than the standard implementation of \texttt{Fennel}. 
Moreover, the performance gap is found to increase as the value of $k$ grow, with \texttt{FREIGHT} reaching up to 261 times \hbox{faster than \texttt{Fennel} in some instances}.

\section{Conclusion}
\label{sec:conclusion}

In this work, we introduce \texttt{FREIGHT}, a highly efficient and effective streaming algorithm for hypergraph partitioning. 
Our algorithm leverages an optimized data structure, resulting in linear running time with respect to pin-count and linear memory consumption in relation to the numbers of nets and blocks.
The results of our extensive experimentation demonstrate that the running time of \texttt{FREIGHT} is competitive with the \texttt{Hashing}  algorithm, with a maximum difference of a factor of four observed in three fourths of the instances.
Importantly, our findings indicate that \texttt{FREIGHT} consistently outperforms all existing (buffered) streaming algorithms and even the in-memory algorithm \texttt{HYPE}, with regards to both cut-net and connectivity measures. 
This underscores the significance of our proposed algorithm as a highly efficient and effective solution for hypergraph partitioning \hbox{in the context of large-scale and dynamic data processing.}
Given our good results, we plan to publicly release our \hbox{algorithm soon}.

\bibliography{phdthesiscs}

\end{document}